\documentclass[letterpaper,10 pt, conference]{ieeeconf}
\IEEEoverridecommandlockouts
\overrideIEEEmargins

\usepackage{cite}
\usepackage{amsmath,amssymb,amsfonts,amsthm}
\usepackage{graphicx}
\usepackage{textcomp}
\usepackage{xcolor}
\usepackage{cleveref}
\usepackage[ruled,vlined]{algorithm2e}
\def\BibTeX{{\rm B\kern-.05em{\sc i\kern-.025em b}\kern-.08em
    T\kern-.1667em\lower.7ex\hbox{E}\kern-.125emX}}
\usepackage[export]{adjustbox}

\theoremstyle{definition}

\newtheorem{theorem}{Theorem}
\newtheorem{lemma}{Lemma}

\newtheorem{assumption}{Assumption}

\newtheorem*{remark}{Remark}
\title{\LARGE \bf Adversarial attacks in consensus-based multi-agent reinforcement learning}
\author{Martin Figura, Krishna Chaitanya Kosaraju, and Vijay Gupta
\thanks{M. Figura, K. C. Kosaraju, and V. Gupta are with the Department of Electrical Engineering at the University of Notre Dame, Notre Dame, IN, 46556 USA,
\{{\tt mfigura,kkosaraj,vgupta2}\}@nd.edu.}
}
\begin{document}
\maketitle
\thispagestyle{empty}
\pagestyle{empty}

\begin{abstract}
Recently, many cooperative distributed multi-agent reinforcement learning (MARL) algorithms have been proposed in the literature. In this work, we study the effect of adversarial attacks on a network that employs a consensus-based MARL algorithm. We show that an adversarial agent can persuade all the other agents in the network to implement policies that optimize an objective that it desires. In this sense, the standard consensus-based MARL algorithms are fragile to attacks. 
\end{abstract}
\section{Introduction}
Multi-agent reinforcement learning (MARL) is a  branch of reinforcement learning (RL) \cite{sutton1998}, where multiple decision-makers learn a policy that is optimal in the context of competitive, cooperative, or mixed objectives \cite{zhang2019,busoniu2008}. A recent success story of MARL in popular parlance include its performance in the video game Starcraft II by training an agent which outperforms the world's best human players \cite{vinyals2019}. In this paper, we focus on MARL for cooperative agents. Potential applications in this stream have been proposed for long for diverse fields such as sensor networks \cite{cortes2004}, robotics \cite{yang2004}, and traffic control \cite{kuyer2008}.  

Early formulations of MARL assumed that the agents share a common reward and focused on decentralized decision-making. The sample efficiency in the training of a multi-agent network in this setting is significantly improved by establishing communication between agents \cite{foerster2016}. Nonetheless, the centralized reward is often infeasible due to overwhelming communication requirements and complex network topology, which motivated the development of cooperative distributed MARL with decentralized knowledge of rewards. In this setting, each agent has a local utility function and views only its own reward. The problem is still cooperative in the sense that the agents wish to maximize the sum of all utility functions. In this problem, the agents must communicate not only to improve the sample efficiency but also to become aware of the other agents' performance. Only by propagating information over the entire network, the agents can achieve a common objective, e.g., to maximize team-average returns. The ability to learn a policy that maximizes the common objective in a partially observable environment can be facilitated by consensus algorithms as presented in \cite{zhang2018}, whereby the agent's rewards (and possibly actions) remain unknown to the rest of the network and must not be directly communicated between agents to ensure their privacy.\par

Consensus algorithms are generally devised for distributed systems to find agreement on signal values over networks \cite{olfati2007}. These algorithms find applications in many fields including sensor networks \cite{olfati2005consensus}, coordination of vehicles \cite{ren2007}, or even blockchain \cite{mingxiao2017}. In practice, consensus algorithms must be robust to faults that arise from relatively frequent occurrences of interrupted communication links or corrupted signals \cite{fischer1983}. Therefore, the convergence of resilient consensus algorithms was rigorously studied under different considerations for the nature of adversarial attacks \cite{leblanc2013}, graph topology \cite{saldana2017,sundaram2007}, or frequency of communication \cite{ding2016}. These research efforts naturally complement studies of the influence of adversarial attacks on network performance. A simple yet powerful result from the analysis of linear consensus \cite{olfati2005consensus} states that the topology of a consensus matrix determines the limiting value for the consensus updates. In the presence of a single \textit{malicious} agent, which does not apply consensus updates, the limiting value coincides with the adversary's value.\par

In the consensus MARL algorithm in \cite[Algorithm~2]{zhang2018}, every agent estimates the team-average reward and value function using linear approximations and exchanges parameters with other agents through a consensus protocol. Interestingly, this scheme guarantees the asymptotic convergence to the team-average optimal policy even with simultaneous actor, critic, and consensus updates over time-varying communication graphs. Furthermore, the algorithm retains the convergence property even with sparse data transmission for strongly connected graphs \cite{lin2019}. 

In this paper, we study the effects of adversarial attacks on the consensus MARL algorithm \cite[Algorithm~2]{zhang2018} with discounted rewards in the objective function. The attacks we consider are different from the commonly studied data poisoning attacks in ML or RL, which seek to understand if changing the data or rewards by an external agent can degrade the performance of RL algorithms \cite{ma2019}. Here, we consider a MARL setting where a participating agent itself is malicious. Specifically, we ask whether a single adversarial agent can either prevent convergence of the algorithm, or even worse, lead the other agents to optimize a utility function that it chooses. We show that the answer to this question is in the affirmative by designing a suitable attack and analyzing the convergence of the algorithm under it. 

Specifically, we take under the scope networks with a single malicious agent, i.e., with an adversary that can compromise the consensus and critic updates and transmits the same signal values to its neighbors. We show that when the malicious agent greedily attempts to maximize its own well-defined objective function, all other agents in the network end up maximizing the adversary's objective function as well. We provide a proof of asymptotic convergence analogous to \cite{zhang2018}. Our study motivates the development of resilient consensus MARL algorithms.

The paper is structured as follows. We provide a background of the networked Markov decision process along with the agents' objectives in Section 2. In Section 3, we state all assumptions in a compact form, present the consensus MARL algorithm, and provide the convergence analysis. Section 4 is dedicated to numerical simulations.

\paragraph*{Notation}
We let $\mathbf{1}$ denote the vector of ones. The operator $\otimes$ represents the Kronecker product. The cardinality of a set $\mathcal{C}$ is denoted by $|\mathcal{C}|$.
\section{Background}
\subsection{Networked Markov decision process}
We consider a networked Markov decision process (MDP) given as a tuple  $(\mathcal{S},\{\mathcal{A}^i\}_{i\in\mathcal{N}},\mathcal{P},\{\mathcal{R}^i\}_{i\in\mathcal{N}},\mathcal{G})$, where $\mathcal{N}=\{1,\dots,N\}$, $\mathcal{S} $ is a set of states, $\mathcal{P}$ is a set of transitional probabilities, $\gamma\in[0,1)$ is a discount factor, $\mathcal{G}$ represents a graph, and $\mathcal{A}^i$ and $\mathcal{R}^i$ are a set of actions and rewards of agent $i$, respectively. The graph $\mathcal{G}=(\mathcal{N},\mathcal{E})$ is defined by a set of vertices $\mathcal{N}$ associated with the agents in the network and a set of edges $\mathcal{E}\subseteq\mathcal{N}\times\mathcal{N}$. The global state and action are denoted by $s$ and $a$, respectively, $s^\prime$ denotes the state at the next time step, and the superscript $i$ denotes a signal of agent $i$. We let $r^i(s,a,s^\prime):\mathcal{S}\times\mathcal{A}\times\mathcal{S}\rightarrow\mathcal{R}_i\subset\mathbb{R}$ denote the individual reward of subsystem $i$, $p(s^\prime|s,a):\mathcal{S}\times\mathcal{S}\times\mathcal{A}\rightarrow\mathcal{P}\subset\mathbb{R}$ the joint transitional probability, and $\pi^i(a^i|s):\mathcal{S}\times\mathcal{A}_i\rightarrow(0,1)$ the policy of subsystem~$i$. The overall network policy can be written as a stacked vector of individual policies, $\pi(a|s)=[\pi^1(a^1|s)^T,\dots,\pi^N(a^N|s)^T]^T$. In case we need to explicitly specify a signal value at time~$t$, we use subscript $t$, i.e., $r_{t+1}^i(s_t,a_t,s_{t+1})$. An important consideration in this work is that agent $i$, $i\in\mathcal{N}$, receives the private reward $r_{t+1}^i$ along with the observation of the global state $s_t$ and action $a_t$ at each step in training.\par
We let $\pi(a|s;\theta)=[\pi^1(a^1|s;\theta^1)^T,\dots,\pi^N(a^N|s;\theta^N)^T]^T$ denote the network policy parameterized by $\theta$, where $\pi^i(a^i|s;\theta^i)$ is a parameterized policy of agent $i$. Further, we distinguish between reward signals by making the following definitions:
\begin{enumerate}
	\item average individual reward at $(s,a)$:\\ $$\hat{r}^i(s,a)=\sum_{s^\prime}p(s^\prime|s,a)r^i(s,a,s^\prime)$$
	\item average individual reward under global policy $\pi(a|s;\theta)$: $$\hat{r}_\theta^i(s)=\sum_a\pi(a|s;\theta)\hat{r}^i(s,a)$$
		\item average individual reward at all state-action pairs $(s,a)\in\mathcal{S}\times\mathcal{A}$: $$\hat{R}^i=[\hat{r}^i(s,a),s\in\mathcal{S},a\in\mathcal{A}]^T\in\mathbb{R}^{|\mathcal{S}|\cdot|\mathcal{A}|}$$
	\item average individual reward under global policy $\pi(a|s;\theta)$ at all states $s\in\mathcal{S}$: $$\hat{R}_\theta^i=[\hat{r}_\theta^i(s),s\in\mathcal{S}]^T\in\mathbb{R}^{|\mathcal{S}|}.$$

\end{enumerate}
We also define team-average rewards $r(s,a,s^\prime)=\frac{1}{N}\sum_{i=1}^Nr^i(s,a,s^\prime)$, $\hat{r}(s,a)=\frac{1}{N}\sum_{i=1}^N\hat{r}^i(s,a)$, $\hat{r}_\theta(s)=\frac{1}{N}\sum_{i=1}^N\hat{r}_\theta^i(s)$, $\hat{R}_\theta=\frac{1}{N}\sum_{i=1}^N\hat{R}^i_\theta\in\mathbb{R}^{|\mathcal{S}|}$, and $\hat{R}=\frac{1}{N}\sum_{i=1}^N\hat{R}^i$. Furthermore, we define the estimated network reward function at $(s,a)$ as $\bar{r}(s,a;\lambda)$, where $\lambda$ are the function parameters.

\subsection{Objective}
We let $\mathcal{N}^+$ and $\mathcal{N}^-$ denote the set of cooperative agents and adversaries, respectively, and note that $\mathcal{N}=\mathcal{N}^+\cup\mathcal{N}^-$. The objective of agents $i\in\mathcal{N}^+$ is to maximize a team-average objective function given as
\begin{align}\label{obj_coop}
    \max_\theta J^+(\theta)=\max_\theta\mathbb{E}\big[\sum_{t=0}^\infty\frac{1}{N}\sum_{i=1}^N\gamma^tr_{t+1}^i|s_0=s\big].
\end{align}
The cooperative agents are unaware of the presence of an adversarial agent that
seeks to maximize a different objective function. We define the objective function for $i\in\mathcal{N}^-$ as
\begin{align}\label{obj_greedy}
    \max_\theta J^-(\theta)=\max_\theta\mathbb{E}\big[\sum_{t=0}^\infty \gamma^tr_{t+1}^{i}|s_0=s\big].
\end{align}
It is important to note that the adversarial agent can compromise the rewards $r_{t+1}^i$, $i\in\mathcal{N}^-$, to incentivize its malicious behavior. Furthermore, once the agents establish communication the adversary can spread false information about the performance of the entire network embedded in the compromised rewards $r_{t+1}^i$. This may eventually lead to incentivizing a bad behavior in the cooperative agents as well, regardless of whether the maximized objective is \eqref{obj_coop} or \eqref{obj_greedy}. In Section~III, we show that the entire network maximizes the adversarial agents' objective in \eqref{obj_greedy} when the adversarial agent ignores signals transmitted by the cooperative agents.

\section{Multi-agent actor-critic algorithm under adversarial attacks}

In this section, we present assumptions on the network and signals, define the consensus MARL algorithm, state main theorems concerning the convergence of the actor and critic, and prove that the adversarial agent persuades the remaining agents in the network to maximize its individual objective in \eqref{obj_greedy} despite their initial agreement to maximize the team objective in \eqref{obj_coop}.\par
We formally define true action-value functions of the cooperative agents and the adversary under policy $\pi(a|s)$ in the respective order as follows
\begin{align*}
Q_\pi^{i}(s,a)&=\mathbb{E}_\pi\bigg[\sum_{t=0}^\infty\frac{1}{N}\gamma^t\sum_{k\in\mathcal{N}}r_{t+1}^k\bigg],\,i\in\mathcal{N}^+\\
Q_\pi^{i}(s,a)&=\mathbb{E}_\pi\bigg[\sum_{t=0}^\infty \gamma^tr_{t+1}^{i}\bigg],\,i\in\mathcal{N}^-
\end{align*}
where $j\in\mathcal{N}^-$. Further, we define the true state value functions as
\begin{align*}
V^i_\pi(s)=\sum_a \pi(a|s)Q_\pi^i(s,a),\quad i\in\mathcal{N}.
\end{align*}
We will approximate $V^i_\pi(s)$ using a parameterized state-value function $V(s;v^i)$.
\begin{remark}
It is required that all agents use the same basis functions (or neural networks) so that their parameter values $v^i$ can eventually converge to a consensus value. This limitation can be overcome by considering gossip-based algorithms \cite{mathkar2016rl}. However, the convergence analysis for gossip-based algorithms is rather challenging.
\end{remark}

\subsection{Assumptions}
In this subsection, we state assumptions needed for the convergence of the consensus MARL algorithm which we introduce later in the next section. The assumptions are similar to \cite{zhang2018}. 
\begin{assumption}
The policy $\pi^i(a^i|s;\theta^i)>0$ for any $i\in\mathcal{N}$, $\theta^i\in\Theta^i$, $s\in\mathcal{S}$, $a^i\in\mathcal{A}^i$. Also, $\pi^i(a^i|s;\theta^i)$ is continuously differentiable with respect to $\theta^i$. For any $\theta\in\Theta$, we let $P_\theta(s_{t+1}|s_t)=\sum_{a_t\in\mathcal{A}}P(s_{t+1}|s_t,a_t)\pi(a|s;\theta)$ denote the transition matrix of the Markov chain $\{s_t\}_{t\geq 0}$ induced by policy $\pi(a|s;\theta)$. The Markov chain $\{s_t\}_{t\geq 0}$ is irreducible and aperiodic under any $\pi(a|s;\theta)$.
\end{assumption}
\begin{assumption}
The update of the policy parameter $\theta_t^i$ includes a local projection operator, $\Gamma_i:\mathbb{R}^{m_i}\rightarrow\mathbb{R}^{m_i}$, that projects any $\theta_t^i$ onto a compact set $\Theta^i$. Also, we assume that $\Theta=\Pi_i^N\Theta^i$ is large enough to include at least one local minimum of $J(\theta)$.
\end{assumption}
\begin{assumption}
The instantaneous reward $r_{t+1}^i(s_t,a_t,s_{t+1})$ is uniformly bounded for any $i\in\mathcal{N}$ and $t\geq 0$.
\end{assumption}
\begin{assumption}
The sequence of random matrices $\{C_t\}_{t\geq 0}\subseteq\mathbb{R}^{N\times N}\subseteq\mathbb{R}^{N\times N}$ satisfies
\begin{enumerate}
	\item $C_t$ is row stochastic, i.e., $C_t\mathbf{1} = \mathbf{1}$, and $c_t(i,j)=1$ for $i=j\in\mathcal{N}^-$.
	There exists a constant $\eta\in(0,1)$ such that, for any $c_t(i,j)>0$, we have $c_t(i,j)\geq\eta$.
	\item $C_t$ respects the communication graph $\mathcal{G}_t$, i.e., $c_t(i,j)=0$ if $(i,j)\notin\mathcal{E}_t$.
	\item The spectral norm of $\mathbb{E}[C_t^T(I-\mathbf{11}^T/N)C_t]$ belongs to $[0,1)$.
	\item Given the $\sigma$-algebra generated by the random variables before time $t$, $C_t$ is conditionally
independent of $r_{t+1}^i$ for any $i\in\mathcal{N}$.
\end{enumerate}
\end{assumption}
\begin{assumption}
For each agent $i$, the state-value function and the team reward function are both parameterized by linear functions, i.e., $V (s;v) = v^T\varphi(s)$ and $\bar{r}(s,a;\lambda)=\lambda^Tf(s,a)$, where $\varphi(s) = [\varphi_1(s),\dots,\varphi_L(s)]\in\mathbb{R}^L$ and $f(s,a)=[f_1(s,a),\dots,f_M(s,a)]\in\mathbb{R}^M$ are the features associated with $s$ and $(s,a)$, respectively. The feature vectors $\varphi(s)$ and $f(s,a)$ are uniformly bounded for any $s\in\mathcal{S}$, $a\in\mathcal{A}$. Furthermore, let the feature matrix $\Phi\in\mathbb{R}^{|\mathcal{S}|\times L}$ have $[\varphi_l(s), s\in\mathcal{S}]^T$ as its $l$-th column for any $l\in [L]$, and the feature matrix $F\in\mathbb{R}^{|\mathcal{S}|\cdot|\mathcal{A}|\times M}$ have $[f_m(s,a), s\in\mathcal{S},a\in\mathcal{A}]^T$ as its $m$-th column for any $m\in[M]$. Both $\Phi$ and $F$ have full column rank.
\end{assumption}

\begin{assumption}
The step sizes $\alpha_{v,t}$ and $\alpha_{\theta,t}$ satisfy $\sum_t\alpha_{v,t}=\infty$, $\sum_t\alpha_{\theta,t}=\infty$, $\sum_t \alpha^2_{v,t}+\alpha^2_{\theta,t}<\infty$, $\alpha_{\theta,t}=o(\alpha_{v,t})$, and $\lim_{t\rightarrow\infty}\alpha_{v,t+1}\alpha_{v,t}^{-1}=1$.
\end{assumption}

\begin{assumption}
The set $\mathcal{N}^-$ contains exactly one element, i.e., there is one malicious agent with a well-defined objective specified in \eqref{obj_greedy}.
\end{assumption}

We note that Assumption~4.3 is satisfied when the communication graph $\mathcal{G}$ is connected in the mean. To simplify the convergence analysis in Section III.C, we assume that there is only one adversary that is learning using compromised rewards and does not perform consensus updates. The latter leads to unbalanced consensus updates in the entire network. We note that more general adversarial attacks are possible, e.g., there may be multiple adversarial agents in the network that may perform arbitrary parameter updates. Nonetheless, the narrow scope of attacks presented in this work is sufficient to demonstrate the fragility of the vanilla consensus MARL algorithm.
\subsection{MARL algorithm}\label{alg}
In this subsection, we introduce the consensus MARL algorithm. We let $\Delta^i$ denote an estimated network TD error estimated by agent $i$. We noted earlier in Section~III that every agent maintains parameters $v^i$ which describe the network value function approximation $V(s,v^i)$. Further, we recall that the rewards $r^i(s,a,s^\prime)$ remain private but the agents are allowed to estimate the network reward function. Intuitively, estimating the network reward function is a necessary step since the agents try to maximize the team-average objective in \eqref{obj_coop}. We let $d_\theta(s)$ denote a stationary distribution of the Markov chain $\{s_t\}_{t\geq 0}$ under policy $\pi(a|s;\theta)$. If the rewards were mutually observable among the agents, they would minimize the weighted mean square error
\begin{align}
arg\min_\lambda\sum_{s\in\mathcal{S},a\in\mathcal{A}}d_\theta(s)\pi(a|s;\theta)\big[\bar{r}(s,a;\lambda)-\hat{r}(s,a)\big]^2.\label{opt_cen}
\end{align}
The optimization problem in \eqref{opt_cen} can be recast into a distributed optimization problem, which has the same stationary points, given as follows
\begin{align}
arg\min_\lambda\frac{1}{N}\sum_{i=1}^N\sum_{s\in\mathcal{S},a\in\mathcal{A}}d_\theta(s)\pi(a|s;\theta)\big[\bar{r}-\hat{r}^i\big]^2.
\end{align}
since $\frac{1}{N}\sum_i\big(\bar{r}(s,a;\lambda)-\hat{r}^i(s,a)\big)=\bar{r}(s,a;\lambda)-\hat{r}(s,a)$. Hence, the agents can individually perform gradient steps to update the parameters $\lambda^i$ based on their true rewards $r^i(s,a,s^\prime)$. By communicating via a consensus protocol, they further gain information about the encoded rewards of the other agents. The estimation and communication of the network reward function parameters $\lambda^i$ and network value function parameters $v^i$ provide the agents with the ability to update their policy to benefit the team. The consensus actor-critic algorithm, a version of \cite[Algorithm~2]{zhang2018} with discounted returns is given in Algorithm~\ref{algorithm}. We note that the algorithm is the same for all agents but the adversary omits the consensus step. Furthermore, the action taken by the full network $a_t$ can be assumed unobservable if the estimated rewards are independent of the actions, i.e., $\bar{r}(s,a;\lambda^i)=\bar{r}(s;\lambda^i)$. In the following subsection, we provide the convergence analysis for Algorithm~1.
\begin{remark}
The scope of this work can be easily extended to \cite[Algorithm~1]{zhang2018}, where agents approximate state-action value function parameters. For such an algorithm, the global action $a_t$ must be observable by all agents in the network.
\end{remark}

\begin{algorithm}[t]\label{algorithm}
	\SetAlgoLined
	 \textbf{Initialize parameters} $\theta_0^i,\lambda^i_0,\tilde\lambda^i_0,v_0^i,\tilde{v}^i_0$, $\forall i\in\mathcal{N}$\;
	 \textbf{Initialize} $s_0,\{\alpha_{v,t}\}_{t\geq 0},\{\alpha_{\theta,t}\}_{t\geq 0}, t\leftarrow 0$\;
	\textbf{Repeat until convergence} \\
 	\For{$i\in\mathcal{N}$}{
  		\textbf{Observe} state $s_{t+1}$, action $a_t$, and reward $r_{t+1}^i$\;
  		\textbf{Update}\\
  		$\tilde\lambda_{t}^i\leftarrow \lambda_t^i+\alpha_{v,t}\big(r_{t+1}^i-\bar{r}_{t+1}(\lambda_t^i)\big)\nabla_{\lambda}\bar{r}_{t+1}(\lambda_t^i)$\;
  		$\delta_t^i\leftarrow r_{t+1}^i+\gamma V(s_{t+1};v_t^i)-V(s_t;v_t^i)$\;
  		$\Delta_t^i\leftarrow \bar{r}_{t+1}(\lambda_t^i)+\gamma V(s_{t+1};v_t^i)-V(s_t;v_t^i)$\;
  		\textbf{Update critic} $\tilde v_{t}^i\leftarrow v_t^i+\alpha_{v,t}\delta_t^i\nabla_v V(s_t;v_t^i)$\;
  		\textbf{Update actor} $\theta_{t+1}^i\leftarrow\theta_t^i+\alpha_{\theta,t}\Delta_t^i\psi_t^i$\;
  		\textbf{Send} $\tilde\lambda_t^i,\tilde{v}_t^i$ to the neighbors over $\mathcal{G}_t$\;
  		\textbf{Take} action $a_{t+1}^i\sim\pi^i(a_{t+1}^i|s_{t+1};\theta_{t+1}^i)$\;
  		 }
  	\For{$i\in\mathcal{N}$}{
  		\textbf{Consensus step}\\
  		$\lambda_{t+1}^i=\sum_{j\in\mathcal{N}} c_t(i,j)\tilde\lambda_{t}^j,\quad v_{t+1}^i=\sum_j c_t(i,j)\tilde v_t^j$\;
  	}
  	\textbf{Update} iteration counter $t\leftarrow t+1$
 \caption{Consensus MARL algorithm}
\end{algorithm}

\subsection{Convergence analysis}
In this subsection, we proceed with the convergence analysis. First, we show that the critic $V(s;v^i)$ and network reward function $\bar{r}(s,a;\lambda^i)$ converge to a fixed point for all $i\in\mathcal{N}$ while the policy $\pi(a|s;\theta)$ remains fixed. Then, we prove the full convergence of the actor updates that occur on a slower timescale. We write the stationary distribution of Markov chain $d_\theta(s)$ for all states $s\in\mathcal{S}$ as a matrix $D_\theta^s=diag[d_\theta(s),s\in\mathcal{S}]$. Similarly, we write the distribution of all state-action pairs $(s,a)$ as a matrix $D_\theta^{s,a}=diag[d_\theta(s)\cdot\pi(a|s;\theta),s\in\mathcal{S},a\in\mathcal{A}]$. For brevity, we use shorthands $\varphi_t=\varphi(s_t)$ and $f_t=f(s_t,a_t)$. Finally, we define the consensus value $\left<z_t\right>=\frac{1}{N}\sum_i z_t^i$ and the disagreement vector $z_{\perp,t}=z_t-\mathbf{1}\otimes\left<z_t\right>$.

\begin{theorem}
Under assumptions 1 and 3-7, for any policy $\pi(a|s;\theta)$, with the updates of $\{v_t^i\}$ in Algorithm~1, we have $\lim_tv_t^i=v_\theta$ and $\lim_t\lambda_t^i=\lambda_\theta$ for $i\in\mathcal{N}$  almost surely. Furthermore, $v_\theta$ and $\lambda_\theta$ are the unique solutions to
\begin{align}
F^TD_\theta^{s,a}(\hat{R}^{j}-F\lambda_\theta)&=0\label{lambda_equilibrium}\\
\Phi^TD_\theta^s\big(\hat{R}_\theta^{j}+\gamma P_\theta\Phi v_\theta-\Phi v_\theta\big)&=0,\label{v_equilibrium}
\end{align}
where $j\in\mathcal{N}^-$.
\end{theorem}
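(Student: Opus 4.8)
The plan is to adapt the two–timescale stochastic approximation argument of \cite{zhang2018} to the unbalanced consensus created by the stubborn adversary. Since $\alpha_{\theta,t}=o(\alpha_{v,t})$ (Assumption~6), the actor is quasi–static on the critic's timescale, so throughout I would freeze $\theta$ and treat $\{s_t,a_t\}$ as a fixed, irreducible and aperiodic Markov chain with stationary distribution $d_\theta$ (Assumption~1). Under Assumption~5 the updates of $v_t^i$ and $\lambda_t^i$ in Algorithm~\ref{algorithm} are affine in the respective parameters, with driving terms that are bounded functions of the current state, action, reward and parameter (Assumptions~3 and~5); hence the whole analysis reduces to a linear stochastic recursion followed by a row–stochastic mixing step, with the Markovian noise handled by the ODE method as in \cite{zhang2018}.

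First I would analyze the adversary in isolation. Because $c_t(j,j)=1$ and $c_t(j,k)=0$ for $k\neq j$ (Assumption~4.1 and Assumption~7), the consensus step leaves $v_t^j$ and $\lambda_t^j$ unchanged, so they obey the single–agent recursions $v_{t+1}^j=v_t^j+\alpha_{v,t}\delta_t^j\varphi_t$ and $\lambda_{t+1}^j=\lambda_t^j+\alpha_{v,t}\big(r_{t+1}^j-f_t^\top\lambda_t^j\big)f_t$, which are autonomous. These are, respectively, a temporal–difference recursion and a linear least–squares recursion with Markovian noise, whose limiting ODEs are $\dot\lambda=F^\top D_\theta^{s,a}(\hat R^{j}-F\lambda)$ and $\dot v=\Phi^\top D_\theta^{s}\big(\hat R_\theta^{j}+\gamma P_\theta\Phi v-\Phi v\big)$. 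Here $F^\top D_\theta^{s,a}F$ is positive definite because $F$ has full column rank (Assumption~5) and $D_\theta^{s,a}$ is a positive diagonal matrix (Assumption~1), while $\Phi^\top D_\theta^{s}(I-\gamma P_\theta)\Phi$ is positive definite by the standard $d_\theta$–weighted contraction property of $\gamma P_\theta$ together with the full column rank of $\Phi$. Thus both ODEs are globally asymptotically stable at the unique solutions $\lambda_\theta,v_\theta$ of \eqref{lambda_equilibrium}--\eqref{v_equilibrium}; after verifying the stability–based boundedness criterion for the iterates, this gives $\lambda_t^j\to\lambda_\theta$ and $v_t^j\to v_\theta$ almost surely.

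Next I would show that every other agent is dragged to the same limits. With $J=\mathbf{1}\mathbf{1}^\top/N$, row stochasticity of $C_t$ gives $C_tJ=J$, hence $(I-J)C_t=(I-J)C_t(I-J)$, and the disagreement vectors obey $z_{\perp,t+1}=\big((I-J)C_t\otimes I\big)\big(z_{\perp,t}+\alpha_{v,t}y_t\big)$ for $z\in\{v,\lambda\}$ with the appropriate increment $y_t$; by Assumption~4.3 (and the conditional independence in Assumption~4.4) one has $\mathbb{E}\big[\,\|(I-J)C_t x\|^2\,\big]\le\rho\,\|x\|^2$ with $\rho\in[0,1)$. Once the iterates are known to be bounded (again from the affine structure), the increments $y_t$ are bounded, so a Robbins--Siegmund / supermartingale argument using $\sum_t\alpha_{v,t}^2<\infty$ (Assumption~6) yields $\|v_{\perp,t}\|\to 0$ and $\|\lambda_{\perp,t}\|\to 0$ almost surely. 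Combining the two steps, the $j$–th block of $v_{\perp,t}$ vanishes and $v_t^j\to v_\theta$, so $\langle v_t\rangle=v_t^j-(v_{\perp,t})_j\to v_\theta$, whence $v_t^i=\langle v_t\rangle+(v_{\perp,t})_i\to v_\theta$ for every $i\in\mathcal{N}$; the identical argument gives $\lambda_t^i\to\lambda_\theta$.

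The main obstacle is precisely that the adversary destroys double stochasticity of $C_t$, so the usual device --- showing that the consensus average $\langle v_t\rangle$ itself runs a clean temporal–difference recursion --- is unavailable; the remedy is to anchor the dynamics to the autonomous adversary and let the contraction of the disagreement transfer its limit to the rest of the network. The remaining technical points are standard but essential: establishing almost–sure boundedness of the iterates \emph{before} invoking the contraction, so that the noise contribution in the supermartingale bound is genuinely $O(\alpha_{v,t}^2)$, and handling Markovian rather than martingale–difference noise through the ODE method with a Poisson–equation correction, exactly as in \cite{zhang2018}.
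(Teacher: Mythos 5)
Your proposal is correct and follows essentially the same route as the paper: the paper likewise decomposes the argument into (i) boundedness of the iterates, (ii) convergence of the stubborn adversary's parameters to the unique solution of \eqref{lambda_equilibrium}--\eqref{v_equilibrium} via its autonomous single-agent ODE, and (iii) vanishing of the disagreement vector $z_{\perp,t}$ via the spectral-norm condition of Assumption~4.3, then transfers the adversary's limit to the whole network through the consensus value $\left<z_t\right>$. The only difference is one of detail: you spell out the $(I-J)C_t$ contraction and Robbins--Siegmund step explicitly, whereas the paper delegates these to the corresponding appendices of \cite{zhang2018}.
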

\begin{proof}

We let $z_t=[(z_t^1)^T,\dots,(z_t^N)^T]^T\in\mathbb{R}^{(M+L)N}$, where $z_t^i=[(\lambda_t^i)^T,(v_t^i)^T]^T$. Furthermore, we define $b_t=r_{t+1}\otimes\begin{bmatrix}f_t^T & \phi_t^T\end{bmatrix}^T$ and $A_t=I\otimes A_t^\prime$, where $A_t^\prime\begin{bmatrix}-f_tf_t^T & 0 \\0 & \phi_t(\gamma\phi_{t+1}-\phi_t)^T \end{bmatrix}$. We let $\mathcal{F}_t^z=\{z_0,Y_\tau,\tau\leq t\}$ denote a filtration where $Y_\tau=\{s_\tau,a_\tau,r_\tau,C_{\tau-1}\}$ is a collection of random variables.
The iterations of Algorithm~1 can be written in a compact form as follows
\begin{align*}
z_{t+1}&=(C_t\otimes I)\big(z_t+\alpha_{v,t}(A_tz_t+b_t)\big)\\
&=(C_t\otimes I)\big[z_t+\alpha_{v,t}\big(h(z_t,Y_t)+M_{t+1}\big)\big]
\end{align*}
where $h(z_t,Y_t)=\mathbb{E}(A_tz_t+b_t|\mathcal{F}_t^z)$ and $M_t=A_tz_t+b_t-\mathbb{E}(A_tz_t+b_t|\mathcal{F}_t^z)$. To prove Theorem~1, we need to show that
\begin{enumerate}
	\item \textbf{Lemma 1}: the parameters $\lambda_t$ and $v_t$ remain bounded for all $t\geq0$,
	\item \textbf{Lemma 2}: the adversary's parameters asymptotically converge, i.e., $\lambda_t^j\rightarrow\lambda_\theta$ and $v_t^j\rightarrow v_\theta$, $j\in\mathcal{N}^-$,
	\item \textbf{Lemma 3}: the agents' parameters asymptotically converge to the consensus value $\left<\lambda_t\right>$ and $\left<v_t\right>$.
\end{enumerate}
We take advantage of the rich convergence analysis in \cite{zhang2018} to prove the lemmas.
\begin{lemma}
	Under assumptions 1 and 3-6, the sequence $\{z_t\}$ satisfies $\sup_t||z_t||<\infty$ almost surely.
\end{lemma}
\begin{proof}
The proof is given in \cite[Appendix C]{zhang2018}. The only difference in our work is that in the absence of the consensus step the updates of $z_t^i$, $i\in\mathcal{N}$, asymptotically follow the ODE $\dot{z}_t^i=\bar{A}_t^\prime z_t^i+\bar{b}_t^i$  where
\begin{align}
	\bar{A}_t^\prime&=\begin{bmatrix}-F^TD_\theta^{s,a}F & 0 \\ 0 &\Phi^TD_\theta^s(\gamma P_\theta-I)\Phi\end{bmatrix}\label{A_t}\\
	\bar{b}_t^i&=\begin{bmatrix}(F^TD_\theta^{s,a}\hat{R}^i)^T & (\Phi^TD_\theta^s\hat{R}_\theta^i)^T\end{bmatrix}^T.\label{b_t}
\end{align}
The discount factor satisfies $\gamma\in[0,1)$ and the stochastic matrix $P_\theta$ has positive eigenvalues that are less than or equal to 1. Therefore, the matrix $\Phi^TD_\theta^{s,a}(\gamma P_\theta-I)\Phi$ has eigenvalues with strictly negative real parts, which implies that the ODE $\dot{z}_t^i=\bar{A}_t^\prime z_t^i+\bar{b}_t^i$ has an asymptotically stable equilibrium. Hence, $\sup_t||z_t||<\infty$ almost surely.

\end{proof}
\begin{lemma}
Under assumptions 1, 3, and 5-7, $\lim_{t\rightarrow\infty} z_t^j=z_\theta$, $j\in\mathcal{N}^-$, almost surely. Furthermore,  $z_\theta=[\lambda_\theta^T,v_\theta^T]^T$ is a unique solution to \eqref{lambda_equilibrium} and \eqref{v_equilibrium}.

\end{lemma}
\begin{proof}
We recall that the adversarial agent does not perform the consensus step. Using the findings in Lemma 1, we can immediately conclude that $\dot{z}_t^j=\bar{A}_t^\prime z_t^j+\bar{b}_t^j$ is the limiting ODE, with $A_t^\prime$ given in \eqref{A_t} and $\bar{b}_t^j=\begin{bmatrix}(F^TD_\theta^{s,a}\hat{R}^j)^T & (\Phi^TD_\theta^s\hat{R}_\theta^j)^T\end{bmatrix}^T$. The ODE has a unique asymptotically stable equilibrium $z_\theta=[\lambda_\theta^T,v_\theta^T]^T$ that satisfies \eqref{lambda_equilibrium} and \eqref{v_equilibrium}.
\end{proof}
\begin{lemma}[Appendix B.4, Step 1 in \cite{zhang2018}]
Under assumptions 1 and 3-7, the disagreement vector $z_{\perp,t}$ satisfies $\lim_{t\rightarrow\infty}z_{\perp,t}=0$ almost surely.
\end{lemma}
To complete the proof of Theorem~1, we recall that
\begin{itemize}
	\item $\lim_{t\rightarrow\infty} (z_t^j-z_\theta)=0$ for $j\in\mathcal{N}^-$ a.s. (Lemma 2)
	\item $\lim_{t\rightarrow\infty}(z_t^i-\left<z_t\right>)=0$ for $i\in\mathcal{N}$ a.s. (Lemma 3).
\end{itemize}
Therefore, $\lim_{t\rightarrow\infty}(\left<z_t\right>-z_\theta)=0$ almost surely where $z_\theta=[\lambda_\theta^T,v_\theta^T]^T$ satisfies \eqref{lambda_equilibrium} and \eqref{v_equilibrium}.
\end{proof}

Having proved the critic and network reward convergence under a fixed policy $\pi(a|s;\theta)$, we proceed to make a statement about the convergence of the actor updates on the slower timescale.
\begin{theorem}\cite[Theorem 4.10]{zhang2018}
Under assumptions 1-7, the policy parameter $\theta_t^i$ converges almost surely to a point in the set of asymptotically stable equilibria of
\begin{align}
\dot\theta^i&=\hat\Gamma^i\big[\mathbb{E}_{s_t\sim d_\theta,a_t\sim\pi_\theta}(\Delta_{t,\theta}\cdot\psi_{t,\theta}^i)\big]\quad \text{for}\quad i\in\mathcal{N},
\end{align}
where $\mathbb{E}_{s_t\sim d_\theta,a_t\sim\pi_\theta}(\Delta_{t,\theta}\cdot\psi_{t,\theta}^i)=\mathbb{E}_{s_t\sim d_\theta,a_t\sim\pi_\theta}\big[(f_t^T\lambda_\theta+\gamma\varphi_{t+1}v_\theta-\varphi_tv_\theta)\nabla_{\theta^i}\log\pi^i(a_t^i|s_t;\theta^i)\big]$.
\end{theorem}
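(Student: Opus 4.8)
The plan is to treat this as the slow-timescale half of a two-timescale stochastic approximation argument, following \cite[Theorem~4.10]{zhang2018} almost verbatim, the only new ingredient being that the fast-timescale limit is now furnished by Theorem~1 of this paper rather than by its team-average analogue. First I would invoke Assumption~6, which gives $\alpha_{\theta,t}=o(\alpha_{v,t})$, so that from the actor's vantage point the critic and reward-function parameters are quasi-static; Theorem~1 then guarantees that for each frozen $\theta$ the fast iterates satisfy $\lambda_t^i\to\lambda_\theta$ and $v_t^i\to v_\theta$ for every $i\in\mathcal{N}$, with $\lambda_\theta,v_\theta$ the unique solutions of \eqref{lambda_equilibrium}--\eqref{v_equilibrium}. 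I would also record that $\theta\mapsto\lambda_\theta$ and $\theta\mapsto v_\theta$ are continuous on the compact set $\Theta$, which follows from the full column rank of $\Phi$ and $F$ (Assumption~5) and the continuity of $d_\theta$ in $\theta$ under Assumption~1.

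Next I would put the (projected) actor recursion $\theta_{t+1}^i=\Gamma_i\big(\theta_t^i+\alpha_{\theta,t}\Delta_t^i\psi_t^i\big)$ into the canonical form $\theta_{t+1}^i=\Gamma_i\big(\theta_t^i+\alpha_{\theta,t}(h^i(\theta_t)+\xi_{t+1}^i+\beta_t^i)\big)$, where $h^i(\theta)=\mathbb{E}_{s_t\sim d_\theta,\,a_t\sim\pi_\theta}\big[(f_t^T\lambda_\theta+\gamma\varphi_{t+1}v_\theta-\varphi_tv_\theta)\nabla_{\theta^i}\log\pi^i(a_t^i|s_t;\theta^i)\big]$ is exactly the claimed mean field, $\xi_{t+1}^i$ is the martingale difference obtained by centering $\Delta_t^i\psi_t^i$ against the natural filtration, and $\beta_t^i$ gathers the two bias terms: one arising from using the local iterates $\lambda_t^i,v_t^i$ in place of their limits, and one from the gap between the empirical law of $(s_t,a_t)$ and the stationary law. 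I would then verify the Kushner--Clark hypotheses in turn: boundedness of $\{\theta_t^i\}$ is free from the projection onto the compact $\Theta^i$ (Assumption~2); square-integrability and conditionally bounded variance of $\{\xi_{t+1}^i\}$ follow from uniform boundedness of rewards (Assumption~3), features (Assumption~5), $\nabla_{\theta^i}\log\pi^i$ (Assumption~1), and of $\lambda_t^i,v_t^i$ (Lemma~1); the Robbins--Monro step-size conditions are Assumption~6; Lipschitz continuity of $h^i$ in $\theta$ comes from the smoothness of $\pi^i$ and the continuity of $d_\theta$ (Assumption~1) together with the continuity of $\theta\mapsto(\lambda_\theta,v_\theta)$ noted above; and $\beta_t^i\to0$ a.s., since its first piece vanishes by Theorem~1 and timescale separation while its second is handled via the Poisson equation of the ergodic chain $\{s_t\}$ (Assumption~1), exactly as in \cite{zhang2018}. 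The projected-ODE theorem then gives a.s.\ convergence of $\{\theta_t^i\}$ to an internally chain-transitive invariant set of $\dot\theta^i=\hat\Gamma^i[h^i(\theta)]$, hence to a point in its set of asymptotically stable equilibria, which is the assertion; and because $\lambda_\theta,v_\theta$ encode the adversary's reward $\hat{R}^{j},\hat{R}_\theta^{j}$, $j\in\mathcal{N}^-$, these limiting dynamics are the projected gradient flow of $J^-(\theta)$ in \eqref{obj_greedy}.

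The step I expect to be the main obstacle is justifying rigorously that the fast-timescale limit $z_\theta=[\lambda_\theta^T,v_\theta^T]^T$ may be substituted into the slow ODE even though the consensus matrices $C_t$ are only row stochastic and the adversary freezes its own row ($c_t(j,j)=1$, $j\in\mathcal{N}^-$). In the doubly stochastic setting of \cite{zhang2018} the consensus value $\langle z_t\rangle$ is a conserved average and the timescale-separation argument is routine; here it is not, so I would have to show that the convergence $\langle z_t\rangle\to z_\theta$ of Theorem~1 is sufficiently uniform over the compact set $\Theta$ --- which is precisely what the spectral-norm bound in Assumption~4.3 buys once combined with compactness --- and that the solution of the relevant Poisson equation is Lipschitz in $\theta$, which in turn rests on uniform ergodicity of $\{s_t\}$ across $\Theta$ from Assumption~1. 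A lesser nuisance is that the martingale and Markov-noise decomposition must be tracked carefully because $\Delta_t^i$ couples $\lambda_t^i$ and $v_t^i$, which sit in the same fast block yet converge to different limits; that is bookkeeping rather than a genuine difficulty.
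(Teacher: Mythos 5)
Your proposal is correct and follows essentially the same route as the paper, which offers no proof of its own and simply defers to the two-timescale Kushner--Clark argument of \cite[Theorem~4.10]{zhang2018}, with the fast-timescale limit supplied by Theorem~1 exactly as you describe. The only caveat is your closing remark that the limiting dynamics are the projected gradient flow of $J^-(\theta)$: because $\Delta_{t,\theta}$ is built from the linear approximations $f_t^T\lambda_\theta$ and $\varphi_t^T v_\theta$, the mean field is only an approximate policy gradient (as the paper itself notes immediately after the theorem), but this remark is not needed for the stated claim.
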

We note that the policy $\pi(a|s;\theta)$ converges to an equilibrium where the estimated network TD error $\Delta_{t,\theta}$ is equal to zero. Since $\Delta_{t,\theta}$ is a function of the parameterized network reward function $\bar{r}(s,a;\lambda)$ and value function $V(s;v)$, the policy $\pi(a|s;\theta)$ does not converge to the true optimal policy. The error between $\pi(a|s;\theta)$ and the true optimal policy that maximizes \eqref{obj_greedy} can be reduced by selecting appropriate models for $\bar{r}(s,a;\lambda)$ and $V(s;v)$.\par
In the next section, we present an example in which a group of agents employs Algorithm~1 to learn an optimal policy and is subject to a malicious attack from a single adversarial agent.

\section{Numerical simulations}
\begin{figure}[b]
\centering
\includegraphics[width=1\linewidth]{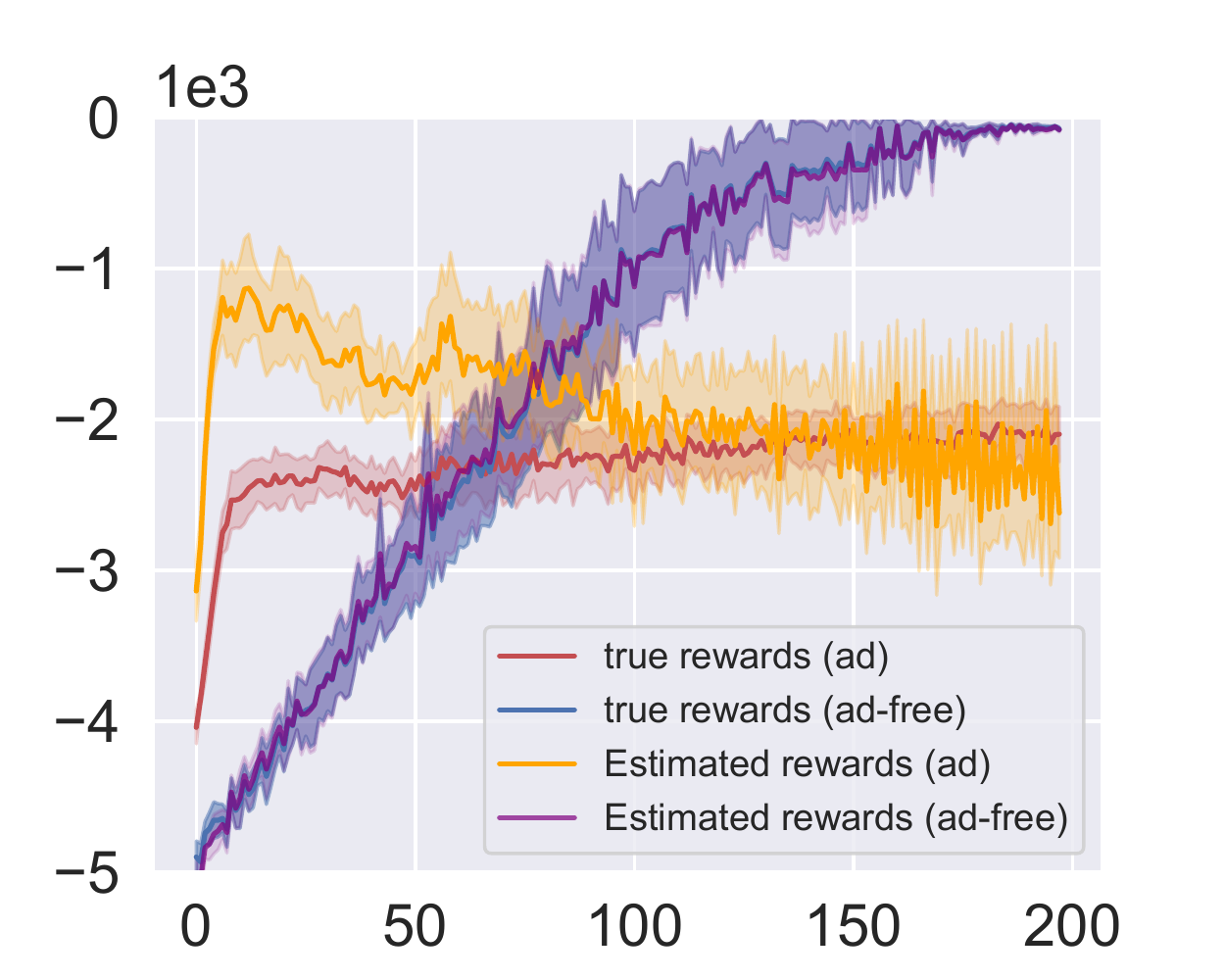}
\caption{Cumulative team-average rewards per episode for the adversary-free and attacked network. The comparison shows that the former performs significantly better}
\label{fig:average_returns}
\end{figure}

In this section, we assess the performance of Algorithm~1 through numerical simulations using nonlinear function approximation. We also compare our results against the decentralized actor-critic algorithm in \cite{zhang2018}. The code for these experiments can be found in \cite{git_marl_2020}.

We consider a network of agents $\mathcal{N}=\{1,2,3,4,5\}$ in a grid-world of dimension $(6\times 6)$. The position of agent~$i$ is described by the tuple $(x_i, y_i)\in S^i$, where $\mathcal{S}^i = [0,~\ldots,~5]^2$. We note that the tuples $(0,0)$ and $(5~,5)$ correspond to the top-left and bottom-right corners of the grid, respectively. The state of the grid-world is given as $s = [(x^i,~y^i),~i \in\mathcal{N}] \in \mathcal{S}$ where $\mathcal{S} = \mathcal{S}^1\times\ldots\times \mathcal{S}^5$. The cardinality of $S$ is $|\mathcal{S}|=36^5$ ($\approx$ 60.5 million states). Agent~$i$, $i\in\mathcal{N}$, takes actions from the set $\mathcal{A}^i=\{ 0:Left, ~1:Right, ~2:Up, ~3:Down, ~4:Stay\}$. If an action is to bring the agent to an infeasible state, then the agent remains in the same state. The set of actions of the network is given as $\mathcal{A}= \mathcal{A}^1 \times\ldots\times \mathcal{A}^5$, whose cardinality is $|\mathcal{A}|=5^5$.\par
The goal of each cooperative agent, $i\in\mathcal{N}^+=\{2,3,4,5\}$, is to maximize the objective in \eqref{obj_coop}, whereas the adversary $i\in\mathcal{N}^-=\{1\}$ attempts to maximize \eqref{obj_greedy}. The rewards of agent $i$, $i\in\mathcal{N}$, are given as follows
\begin{align*}
    r^i(s^i)&= -|x^i-x_{\mathrm{des}}^i|-|y^i-y_{\mathrm{des}}^i| - q^i,
\end{align*}
where $q^i$ denotes the number of neighboring agents that agent $i$ collides at the current time step. For simplicity, we consider that the communication graph $\mathcal{G}$ is fully connected and the consensus matrix $C$ in the adversary-free scenario has elements $c(i,j)=1/5$ for $i\in\mathcal{N}^+,\,j\in\mathcal{N}$.

\begin{figure}[t]
\centering
\includegraphics[width=1\linewidth]{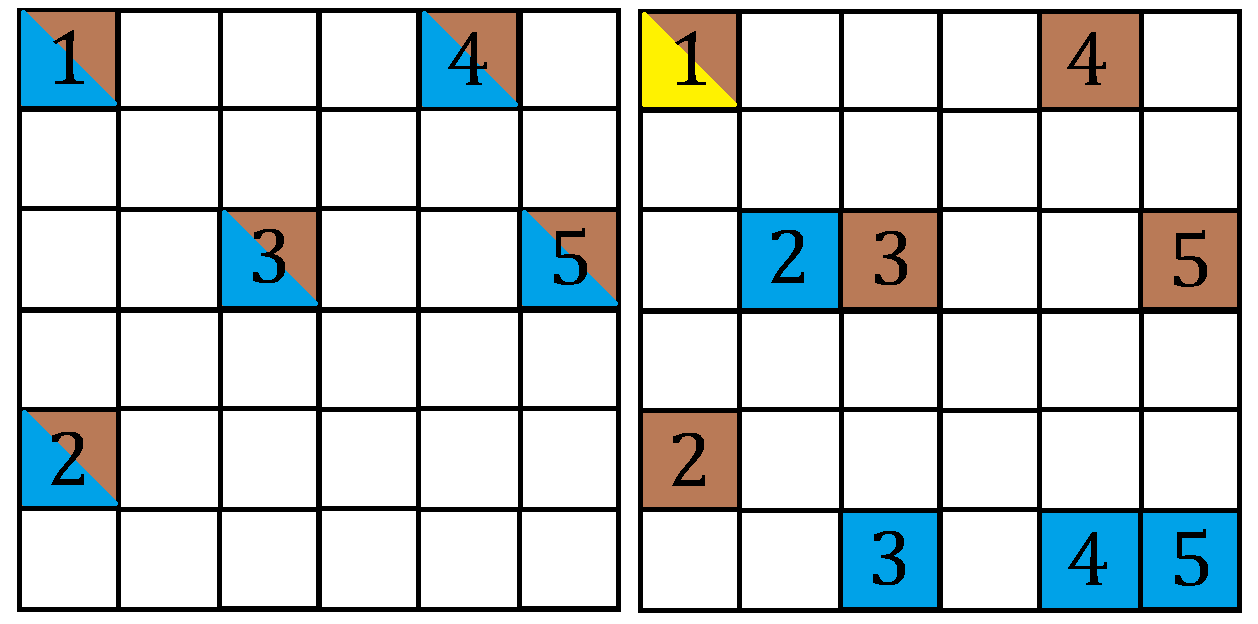}
\caption{Final network states in a simulation with no adversary (left) and with an adversary (right) after training for 200 episodes. Blue, yellow, and brown cells correspond to the cooperative agents', adversary's, and desired positions, respectively. All agents reach their desired positions when the network is adversary-free, whereas only the adversary finds its desired position when it attacks the network.}
\label{fig:end_states}
\end{figure}

\begin{figure*}[t]
    \centering
    \includegraphics[trim={6cm 0 6cm 0},clip, width=17cm, height=4cm]{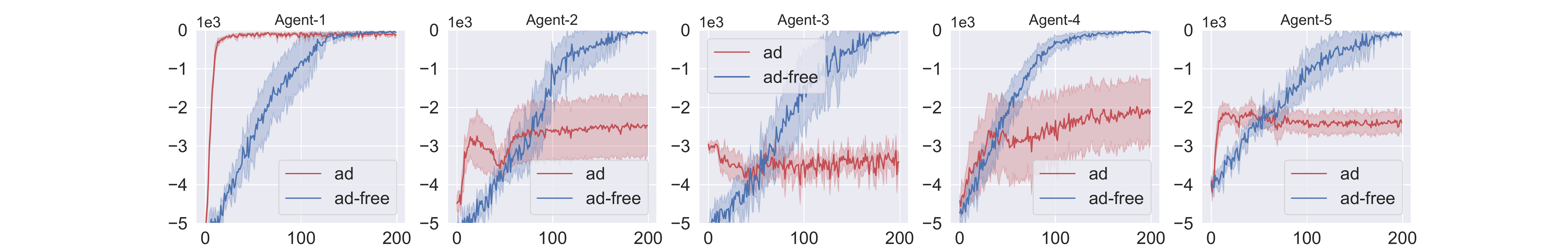}
    \caption{True cumulative rewards per episode obtained by each agent in the network. Blue and red color depict the performance of adversary-free network and the attacked network, respectively. The adversary quickly learns an optimal policy because it acts greedily with respect to the rest of the network. }
    \label{fig:agent_returns}
\end{figure*}

In both scenarios, we trained the agents for 200 episodes. Final states of a simulation of the grid-world after training are shown in Fig.~\ref{fig:end_states}. The agents' positions in the grid-world were randomly initialized in each training episode that was set to terminate after $1000$ steps or when the agents have reached their desired positions. The actor $\pi^i(a^i|s,\theta^i)$, critic $V(s,v^i)$, and global reward functions $\bar{r}(s,\lambda^i)$ were approximated using artificial neural networks with two hidden layers. In Fig.~\ref{fig:average_returns}, we compare the true cumulative team-average returns and cumulative estimated rewards obtained by the network in each episode. The estimated reward function converged in both scenarios but the convergence rate was slower in the presence of the adversary. The accumulated rewards per episode of each agent are depicted in Fig.~\ref{fig:agent_returns}. We can see that all agents learn a near-optimal policy when there is no adversary in the network. The adversary indeed harms the network, i.e., it learns a near-optimal policy but the remaining agents in the network perform poorly compared to the adversary-free scenario.

\section{Conclusion and future work}
In this paper, we showed that the general consensus MARL algorithm originally proposed in \cite{zhang2018} is not robust to adversarial attacks. We studied a well-defined malicious attack whereby a single adversarial agent attempts to compromise the objective function of a network of agents. We showed in the analysis that the network policy upon convergence locally maximizes the adversary's objective function under this specific malicious attack. Our work naturally raises the question of whether we can develop consensus-based MARL algorithms that are resilient to general adversarial attacks. Such attacks may include compromised rewards, arbitrary parameter updates, and arbitrary changes in the policy. There are many results on resilient consensus algorithms in the literature but it is unclear if the theoretical analysis can carry over to RL algorithms. The unique challenge for resilient consensus MARL is to provide robustness for the functions jointly estimated by the network of agents while the rewards remain private.
\bibliography{references}
\bibliographystyle{ieeetr}
\end{document}